\let\elsarticle@keyword\keyword
\g@addto@macro\frontmatter{\let\keyword\elsarticle@keyword}
\newtheorem{lemma}{Lemma}
\newtheorem{theorem}{Theorem}
\begin{document}

\begin{frontmatter}



\title{Colored Points Traveling Salesman Problem}


\author{Saeed Asaeedi \corref{cor1}}

 \cortext[cor1]{Corresponding author}
\ead{asaeedi@kashanu.ac.ir}
\affiliation{organization={Department of Computer Science, Faculty of Mathematical Sciences, University of Kashan},
            city={Kashan},
            postcode={87317-53153}, 
            country={I. R. Iran}}


\begin{abstract}

The Colored Points Traveling Salesman Problem (Colored Points TSP) is introduced in this work as a novel variation of the traditional Traveling Salesman Problem (TSP) in which the set of points is partitioned into multiple classes, each of which is represented by a distinct color (or label). The goal is to find a minimum cost cycle $C$ that visits all the colors and only makes each one appears once. This issue has various applications in the fields of transportation, goods distribution network, postal network, inspection, insurance, banking, etc. By reducing the traditional TSP to it, we can demonstrate that Colored Points TSP is NP-hard. Here, we offer a $\frac{2\pi r}{3}$-approximation algorithm to solve this issue, where $r$ denotes the radius of the points' smallest color-spanning circle. The algorithm has been implemented, executed on random datasets, and compared against the brute force method.

\end{abstract}



\begin{keyword}
Colored Points TSP\sep Colored TSP \sep TSP\sep Computational Geometry
\end{keyword}

\end{frontmatter}


\section{Introduction}
The Traveling Salesman Problem (TSP) is a famous classical problem in the field of combinatorial optimization. 
Given a set of points in the plane, this problem finds the minimal cost cycle that visits each point exactly once. TSP is investigated by Laporte in~\cite{laporte1992traveling} and some exact and approximate algorithms are overviewed on this problem.
This classic problem has numerous variants. The Chromatic Traveling-Salesmen Problem~\cite{harvey1974chromatic} involves determining the minimum number of salesmen when the total length of a salesman's tour is limited by a specified constant.

The Colorful Traveling Salesman Problem, introduced by Xiong et al.~\cite{xiong2007colorful}, is described on a connected undirected graph whose edges are colored. The objective is to find a Hamiltonian cycle on the graph with the minimum number of distinct colors. The Multicolor Traveling Salesman Problem is defined on a complete graph whose vertices are colored~\cite{tresoldi2010solving}. Its aim is to identify the optimal Hamiltonian cycle such that the number of vertices of the subtour between two consecutive vertices of the same color is bounded. The Colored Traveling Salesman Problem, as introduced by Li et al.~\cite{li2014colored}, deals with a set of colored points. This problem represents a variant of the multiple traveling salesman problem, wherein each salesman is associated with a distinct color and is permitted to visit points of the same color, along with shared points that are common to all salesmen. In~\cite{meng2021colored}, the colored traveling salesman problem is considered on the points with varying colors. The Labeled Traveling Salesman Problem~\cite{couetoux2010labeled} is defined as the problem of finding a tour on a complete graph with colored edges, aiming to maximize or minimize the number of distinct colors used.

In this paper, we present a novel variation of the traditional TSP, called Colored Points TSP. This problem is defined on a set of colored points in the plane, with the objective being to determine the shortest cycle that visits points of all distinct colors, ensuring that each color is visited exactly once. We prove the NP-hardness of the problem and propose an $\frac{2\pi r}{3}$-approximation algorithm for its solution, where $r$ represents the radius of the smallest color-spanning circle containing the points.

The minimum spanning circle problem was introduced by Sylvester~\cite{sylvester1857question} in 1857. A linear time algorithm is presented by Megiddo~\cite{megiddo1983linear} to solve this problem. Abellanas et al.~\cite{abellanas2001farthest} present two algorithms for computing the colored version of this problem, known as the Minimum Color-Spanning Circle Problem: Given $n$ points with $k$ different colors, the objective is to discover the smallest spanning circle that contains at least one point of each color. The first algorithm operates in $O(nk)$ time following the computation of the Farthest Color Voronoi Diagram~\cite{abellanas2001farthest} of the points in $O(n^2\alpha(k)\log{k})$ time. Meanwhile, the second algorithm, which is faster for small values of $k$, runs in $O(k^3n\log{n})$.

Abellanas et al.~\cite{abellanas2001smallest} compute the smallest color-spanning objects such as axis-parallel rectangle and the narrowest strip. Recently, Acharyya et al.~\cite{acharyya2022minimum} have delved into the consideration of the minimum color-spanning circle on imprecise points.


The rest of the paper is organized as follows. Section~\ref{sec:2} formally defines the Colored Points TSP, presenting both an exact and an approximation algorithm to address this problem. Section~\ref{sec:3} delves into the numerical results of the presented algorithms. Section~\ref{sec:4} concludes the paper by highlighting its achievements.


\section{Colored Points TSP}
\label{sec:2}
Let $S=\{s_1,\dots, s_n\}$ be a set of points in the plane and $C=\{c_1,\dots,c_k\}$ be a set of distinct colors. We assume that each point $s_i$ is associated with only one color $c_j$, which we indicate as $C(s_i)=c_j$. Consequently, the points are categorized into $k$ classes, with each class containing points of the same color. We represent the number of points in the $i$'th class as $x_i$, where $i \in \{1,2,\dots, k\}$ and $\Sigma_{i=1}^{k}x_i=n$. For every $i \in \{1,2,\dots, k\}$, we consider $CS_i$ as a vector of points with the color $c_i$, i.e., $\lvert CS_i \lvert=x_i$. The main problem involves finding a polygon $P=(p_1,p_2,\dots, p_k,p_1)$ with the shortest perimeter, ensuring that $p_i \in S$ for all $ i \in \{1,2,\dots,k\}$, and $C(p_i)\neq C(p_j)$ for all $ i\neq j \in \{1,2,\dots,k\}$. We call this problem as Colored Points TSP.

The Colored Points TSP presents a variation of the classic TSP, wherein the salesman visits all the colors (not all the points) exactly once. This concept finds applications across diverse domains, including transportation, goods distribution networks, postal systems, inspections, insurance, banking, and more. Wherever one encounters a scenario involving various branches of different service providers, and the primary aim is to navigate through exactly one branch of each service provider, optimizing for the shortest path, the Colored Points TSP emerges as a significant tool. We assume that the requirement of visiting a single branch of each provider is adequate, and that each branch of a provider is indistinguishable from the others.

In Fig.~\ref{fig:1}, we observe an instance of the problem for $n=70$ and $k=7$. It is evident that this problem is more general than the traditional TSP; specifically, when $k=n$, the Colored Points TSP effectively transforms into the TSP. As demonstrated in Theorem~\ref{thm:1}, the Colored Points TSP is proven to be NP-hard.

\begin{figure}[h]%
\centering
  \includegraphics[width=0.5\linewidth]{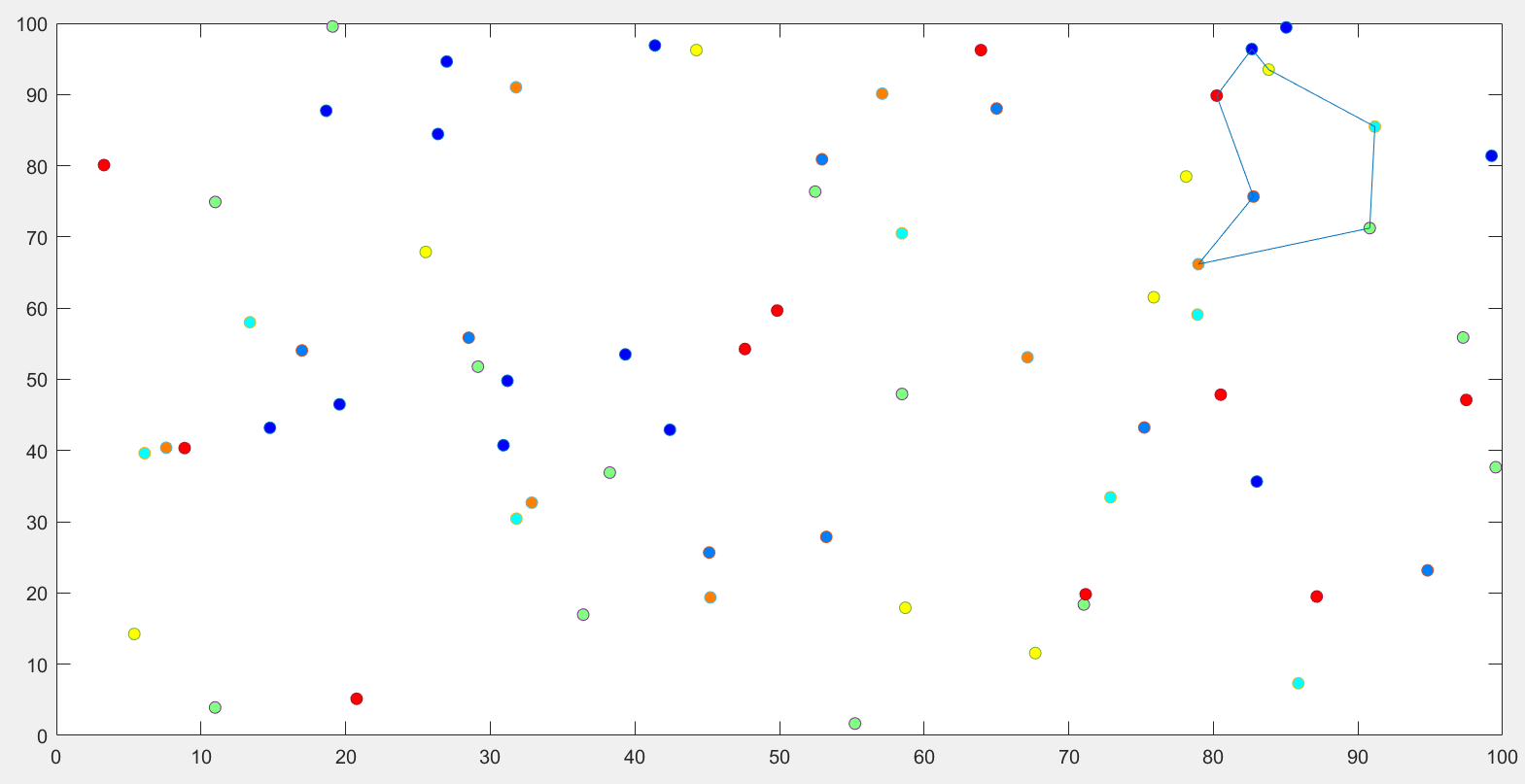}
  \caption{All 70 points are associated with one of 7 distinct colors, and the line segments collectively form a polygon with the smallest possible perimeter, encompassing all 7 unique colors.}
  \label{fig:1}
\end{figure}

\begin{theorem}
\label{thm:1}
The Colored Points TSP is NP-hard.
\end{theorem}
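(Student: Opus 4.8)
The plan is to prove NP-hardness by a reduction from the classical Euclidean Traveling Salesman Problem, whose decision version is known to be NP-hard. The excerpt itself signals this route when it observes that ``when $k=n$, the Colored Points TSP effectively transforms into the TSP.'' I would exploit precisely this degenerate case: given an instance of the ordinary planar TSP consisting of $n$ points $\{s_1,\dots,s_n\}$ together with a target length $L$, I construct an instance of the Colored Points TSP on the same point set, assigning to each point a \emph{distinct} color, so that $k=n$ and every color class $CS_i$ is a singleton. This reduction is clearly computable in polynomial (indeed linear) time.

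The heart of the argument is to verify that the reduction preserves the answer, i.e. that the two instances have the same optimal value. First I would argue the forward direction: any Hamiltonian cycle on the original points, being a cycle that visits each point exactly once, is automatically a feasible solution to the Colored Points TSP, because with all colors distinct ``visiting each color exactly once'' coincides with ``visiting each point exactly once.'' Conversely, a feasible tour for the constructed Colored Points instance is a polygon $P=(p_1,\dots,p_k,p_1)$ with $p_i\in S$ and $C(p_i)\ne C(p_j)$ for $i\ne j$; since each color appears on exactly one point, the constraint $C(p_i)\ne C(p_j)$ forces the $p_i$ to be $k=n$ distinct points of $S$, hence a permutation of all of $S$, which is exactly a Hamiltonian cycle. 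Because the two instances share the same point set and the objective in both is the Euclidean perimeter of the tour, the feasible solutions are in a perimeter-preserving bijection, so the optimal cost of the Colored Points TSP equals the optimal cost of the underlying TSP. Consequently the constructed instance has a tour of length at most $L$ if and only if the original TSP instance does.

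From this correspondence the NP-hardness follows immediately: a polynomial-time algorithm for (the decision version of) the Colored Points TSP would, via the reduction, solve Euclidean TSP in polynomial time, and since Euclidean TSP is NP-hard, so is the Colored Points TSP. I would state explicitly that I am reducing \emph{from} TSP \emph{to} Colored Points TSP (TSP $\le_p$ Colored Points TSP), the correct direction for establishing hardness of the latter.

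I expect the only genuine subtlety, rather than an obstacle, to be pinning down exactly which TSP variant is being reduced from and making sure the chosen variant is itself established as NP-hard. The cleanest choice is the geometric/Euclidean TSP in the plane, to match the planar point-set setting of the Colored Points TSP; alternatively one could phrase the reduction from metric TSP and embed the resulting metric in the plane. Beyond selecting and citing this base hardness result, the reduction is structurally trivial, so no heavy calculation is required; the proof is essentially the identification of the $k=n$ special case with ordinary TSP together with the perimeter-preserving bijection between feasible solutions described above.
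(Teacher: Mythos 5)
Your proposal matches the paper's proof exactly: both reduce the classical TSP to the Colored Points TSP by assigning each point its own distinct color (so $k=n$), and both observe that feasible tours then correspond bijectively, perimeter for perimeter. Your version is simply more careful, spelling out the correctness of the reduction in both directions and noting that the base problem (Euclidean TSP) must itself be cited as NP-hard, details the paper leaves implicit.
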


\begin{proof}
We reduce the clasic TSP to the Colored Points TSP by transforming each instance of TSP, denoted as $A=\{s_1,s_2,\dots, s_n\}$, into an instance of Colored Points TSP represented as $S=\{s_1,s_2,\dots, s_n\}$, $C=\{c_1,c_2,\dots,c_n\}$, and $C(s_i)=c_i$ for all $i\in \{1,2,\dots,n\}$. Notably, the polygon with the minimal perimeter traversing through all colors is equivalent to the polygon with the minimal perimeter traversing through all points..
\end{proof}

In the following, we introduce the brute-force algorithm and an approximation algorithm to address the Colored Points TSP. In the brute-force algorithm, we examine all potential combinations of $k$ distinct colors from $n$ points and determine the one with the smallest possible perimeter. For a detailed description of this approach, refer to Algorithm~\ref{alg:1}.

\begin{algorithm}
\caption{The brute-force algorithm for the Colored Points TSP}\label{alg:1}
\begin{algorithmic}
\Require $CS_i$ $\forall i \in \{1,2,\dots, k\}$ 
\Ensure $P=(p_1,p_2,\dots, p_k,p_1)$ as the polygon with the shortest perimeter that encompasses all distinct colors.
\State $P \gets ()$
\State $min \gets $ MAXINT
\For{$i_1 \gets 1$  \TO $x_1$ } 
{
\For{$i_2 \gets 1$  \TO $x_2$ } 
{
\State $\vdots$

\For{$i_k \gets 1$  \TO $x_k$ } 
{
\State $Q=(CS[i_1],CS[i_2],\dots, CS[i_k],CS[i_1])$
\If{Perimeter of $Q$ is less than $min$}
      \State $P \gets Q$
      \State $min \gets $ Perimeter of $Q$
\Else
\State 	Break
\EndIf 

} 
\EndFor

$\vdots$

} 
\EndFor

} 
\EndFor
\end{algorithmic}
\end{algorithm}

The time complexity of Algorithm~\ref{alg:1} is $O(x_1*x_2*\dots*x_k)$. In the worst case, when $x_1=x_2=\dots=x_k=\frac{n}{k}$, the time complexity is $O(\frac{n}{k}^k)$. In the following, we present an approximation algorithm for the Colored Points TSP, achieving a factor of $\frac{2\pi r}{3}$, where $r$ represents the radius of the smallest color-spanning circle on $S$. This approach is detailed in Algorithm~\ref{alg:2}, which outlines the steps of the approximation algorithm utilizing the onion peeling technique and the minimum color-spanning circle algorithm.


\begin{algorithm}
\caption{The approximation algorithm for the Colored Points TSP}\label{alg:2}
\begin{algorithmic}
\Require $S=\{s_1,s_2,\dots, s_n\}$ colored with $C=\{c_1,\dots,c_k\}$
\Ensure $P=(p_1,p_2,\dots, p_k,p_1)$ as the approximated polygon with the shortest perimeter that encompasses all distinct colors.

\State $R \gets $ Minimum Color-Spanning Circle on $S$ and $C$
\State $MSP \gets $ The set of points inside $R$
\State $MSP \gets $ Remove duplicate colors from MSP
\State $P \gets ()$

\While{$MSP$ is not empty } 

\State $L \gets $ Points on the boundary of the convex hull of $MSP$
\State $P \gets P.L$  \Comment{ Concatenation of $P$ and $L$}
\State $MSP \gets  MSP-L$ \Comment{ Remove $L$ from $MSP$}

\EndWhile
\State $P \gets P.P(1)$ \Comment{ Add first point of $P$ to $P$}

\end{algorithmic}
\end{algorithm}

In Algorithm~\ref{alg:2}, we begin by employing the algorithm outlined in~\cite{abellanas2001farthest} to compute $R$, representing the minimum color-spanning circle of the colored points $S$, within an efficient runtime of $O(k^3n\log{n})$. Subsequently, let $MSP$ denote the set of points within the circle $R$, each of which contains at least one point of each color. To ensure we avoid visiting the same color multiple times, we eliminate points with duplicate colors from $MSP$. As a result, the cardinality of $MSP$ is $\lvert MSP \lvert=k$.

In the next steps, we calculate the layers of onion peeling on $MSP$. For each layer $L$, we traverse the points of $L$ in a clockwise manner, commencing with the leftmost point. In the worst-case scenario, where the maximum onion depth is $\frac{k}{3}$, the time complexity of the while loop is $O(k^2\log{k})$. Subsequently, Theorem~\ref{thm:2} accompanied by a lemma demonstrates that the approximation factor of algorithm~\ref{alg:2} amounts to $\frac{2\pi r}{3}$. According to Pick's theorem, the lower bound for the area of each polygon within the set of $n$ points is $\frac{n}{2}-1$. Lemma~\ref{lm:1} provides a lower bound for the perimeter of these polygons.




\begin{lemma}
\label{lm:1}
Given a set $S$ of $n$ points on the grid, let $P$ be the polygon with the minimum perimeter that crosses all the points of $S$. Then, the perimeter of $P$ is greater than or equal to $n$.
\end{lemma}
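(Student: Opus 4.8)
The plan is to derive the bound directly from the combinatorial structure of $P$ together with the elementary geometry of the integer grid, rather than from Pick's theorem (which the surrounding text reserves for the companion area bound). The key observation is that a polygon through $n$ grid points has exactly $n$ edges, and each such edge is forced to have length at least $1$.

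First, I would note that since $P$ is a closed polygon whose vertices are the $n$ points of $S$, say $P=(p_1,p_2,\dots,p_n,p_1)$, it has exactly $n$ edges: the segments $p_1p_2, p_2p_3, \dots, p_{n-1}p_n$ together with the closing segment $p_np_1$. The perimeter of $P$ is, by definition, the sum of the lengths of these $n$ segments.

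Second, I would establish the single metric fact on which everything rests: any two distinct grid points lie at Euclidean distance at least $1$. Indeed, if $(a,b)$ and $(c,d)$ are distinct integer points, then $(a-c)^2+(b-d)^2$ is a positive integer, hence at least $1$, so $\sqrt{(a-c)^2+(b-d)^2}\ge 1$. Because consecutive vertices of $P$ are distinct points of $S$, every one of its $n$ edges has length at least $1$, and summing over the $n$ edges yields a perimeter of at least $n$, as claimed.

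The argument contains no genuinely hard step; what deserves care is the bookkeeping. I would make explicit that $P$ has precisely $n$ edges, so that the closing segment $p_np_1$ is counted and no edge is either omitted or double-counted, and that consecutive vertices never coincide, so that no edge degenerates to length $0$. A subtle point worth addressing is the interpretation of ``crosses all the points'': if one insists only that the polygon pass through each point (allowing a point to lie in the interior of an edge rather than at a vertex), then the edge count can drop below $n$ and the clean per-edge argument must be supplemented; under the natural reading for the Colored Points TSP, however, the $n$ points are exactly the vertices of $P$, and the per-edge bound applies verbatim. That interpretive remark, rather than any computation, is the main obstacle I would flag.
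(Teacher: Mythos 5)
Your argument is correct and is essentially the paper's own proof: both rest on the facts that $P$ has $n$ edges and that two distinct grid points are at distance at least $1$, the only difference being that you sum the per-edge bound directly while the paper phrases the same step as a contradiction. Your closing caveat about points lying in the interior of an edge is a fair observation about the statement's wording, but it does not change the approach.
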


\begin{proof}
If $P$ is a polygon with $n$ edges, and we assume by contradiction that the perimeter of $P$ is less than $n$, then it would imply that the length of an edge of $P$ is less than 1. However, since the points of $S$ are on the grid, it's impossible for the distance between two points to be less than 1.
\end{proof}

\begin{theorem}
\label{thm:2}
The approximation factor of Algorithm~\ref{alg:2} is $\frac{2\pi r}{3}$, where $r$ is the radius of the smallest color-spanning circle on $S$.
\end{theorem}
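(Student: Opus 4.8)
The plan is to bound the worst-case ratio $\text{per}(P_{ALG})/\text{per}(P_{OPT})$, where $P_{ALG}$ is the polygon returned by Algorithm~\ref{alg:2} and $P_{OPT}$ is an optimal Colored Points TSP tour, and to show it never exceeds $\frac{2\pi r}{3}$. I would split the argument into a lower bound on the optimum and an upper bound on the algorithm's output, and then divide the two.

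For the lower bound, I would observe that every feasible tour -- in particular $P_{OPT}$ -- is a closed polygon on exactly $k$ vertices, one per color. Since the points lie on the integer grid, Lemma~\ref{lm:1} applies to these $k$ vertices and gives $\text{per}(P_{OPT}) \ge k$. This is the only place the grid hypothesis (and the Pick's theorem motivation) enters, and it is the source of the dimensionful-looking factor $r$.

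For the upper bound I would exploit the onion structure of $MSP$. After duplicate removal $\lvert MSP \rvert = k$, and all its points lie inside the minimum color-spanning circle $R$ of radius $r$. Each onion layer $L_j$ is a convex polygon contained in $R$, so by monotonicity of perimeter under inclusion of convex bodies $\text{per}(L_j) \le \text{per}(R) = 2\pi r$. Because every layer is itself a convex polygon, in the worst case it consumes at least three of the $k$ points, so the number of layers is at most $\frac{k}{3}$, as already noted in the discussion preceding the theorem. Summing the layer perimeters would then yield
\begin{equation}
\text{per}(P_{ALG}) \;\le\; \sum_{j} \text{per}(L_j) \;\le\; \frac{k}{3}\cdot 2\pi r .
\end{equation}
Dividing the two bounds gives $\text{per}(P_{ALG})/\text{per}(P_{OPT}) \le \frac{(2\pi r k)/3}{k} = \frac{2\pi r}{3}$, as claimed.

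I expect the main obstacle to be justifying the first inequality in the displayed bound. The concatenation $P \gets P.L$ followed by closing the polygon does not literally traverse each layer's full boundary: within a layer the traversal walks clockwise from the leftmost point but then jumps to the next layer instead of closing that layer, and a final edge links the innermost layer back to the start. I would handle this by charging each inter-layer connecting segment against the hull edge of $L_j$ that the traversal omits, using the triangle inequality together with the nesting $L_{j+1} \subseteq \text{conv}(L_j)$, so that inner layers and the jumps between them remain inside the outer ones and cannot inflate the total beyond the sum of the convex perimeters. Making this charging argument precise -- rather than merely bounding every edge crudely by the diameter $2r$, which would only give the weaker factor $2r$ -- is the delicate part, since it is exactly what buys the improvement from $2r$ down to $\frac{2\pi r}{3}$.
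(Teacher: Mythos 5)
Your proposal follows essentially the same route as the paper's proof: bound each onion layer's perimeter by the circumference $2\pi r$ of the minimum color-spanning circle, multiply by the at-most-$\frac{k}{3}$ layers, and divide by the lower bound $OPT \ge k$ supplied by Lemma~\ref{lm:1}. The one point where you go beyond the paper --- worrying that the concatenated tour does not literally traverse each layer's closed boundary, so the inter-layer connecting segments and the final closing edge need a separate charging argument --- is a gap the paper's own proof silently passes over, so your extra care there is an improvement on, not a deviation from, the published argument.
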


\begin{proof}
Given the points of $MSP$ inside the circle $R$, each layer $L$ of onion peeling has a length less than or equal to the circle's circumference, $2\pi r$.
In the worst case, if we have $\frac{k}{3}$ layers, the perimeter of $P$ will be less than or equal to $\frac{k}{3}2 \pi r$. 
On the other hand, if we assume that $OPT$ represents the perimeter of the optimal polygon $P^*$ on $MSP$, as per Lemma~\ref{lm:1}, $OPT$ is greater than or equal to $k$. Therefore, the perimeter of $P$ will be less than or equal to $\frac{2 \pi r}{3} * OPT$
\end{proof}

\section{Implementation}
\label{sec:3}

We have implemented both algorithms~\ref{alg:1} and~\ref{alg:2} in Matlab and executed them on multiple datasets. The results shown in Fig~\ref{fig:2} pertain to sets of 50 and 70 random points, each with 5 and 7 distinct colors, respectively. It's important to note that due to its time-consuming nature, Algorithm~\ref{alg:1} was not viable for running on larger datasets. Subsequently, Fig~\ref{fig:3} demonstrates the results of Algorithm~\ref{alg:2} when applied to sets of 500 and 1000 random points, with 10 and 20 distinct colors, respectively.

\begin{figure}[h]%
\centering
  \includegraphics[width=0.7\linewidth]{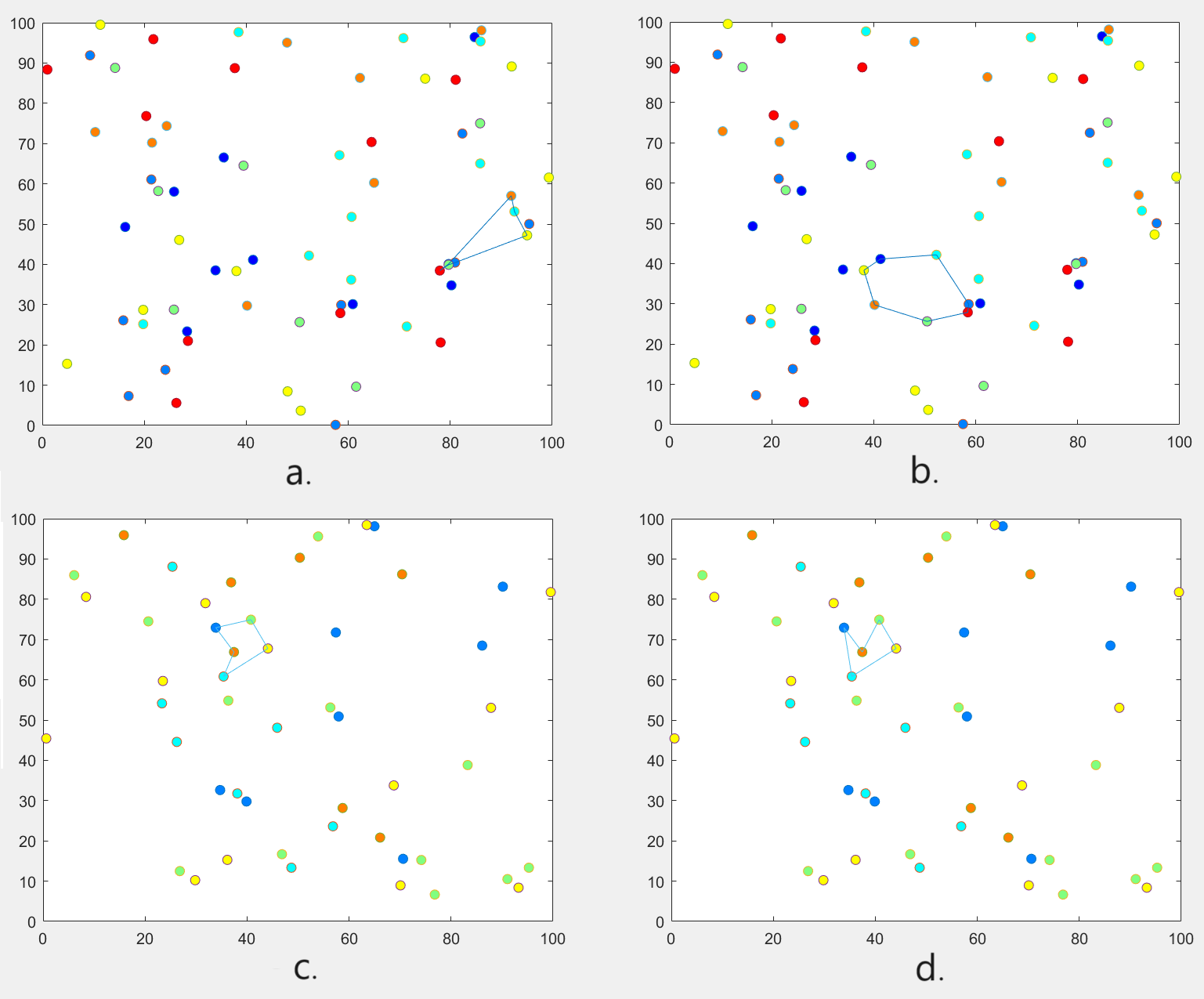}
  \caption{Figures (a) and (b) display the results of both the exact and approximate algorithms for the problem with $n=70$ and $k=7$, while figures (c) and (d) show the results for the problem with $n=50$ and $k=5$.}
  \label{fig:2}
\end{figure}

\begin{figure}[h]%
\centering
  \includegraphics[width=\linewidth]{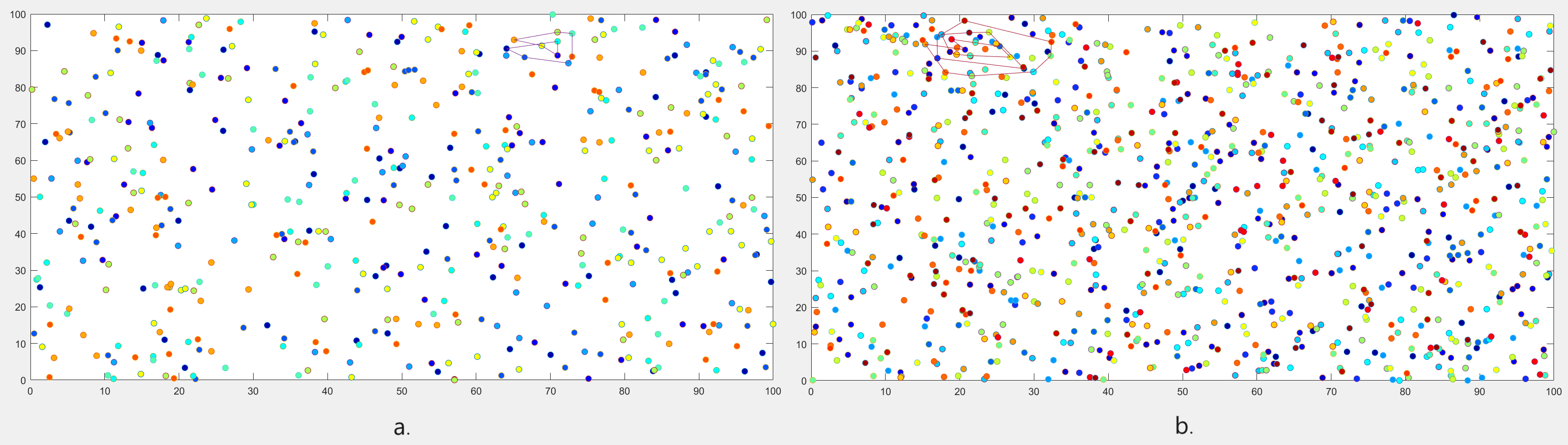}
  \caption{The results of the approximation algorithm are presented for (a) a set of 500 points with 10 distinct colors and (b) a set of 1000 points with 20 distinct colors..}
  \label{fig:3}
\end{figure}

As previously discussed, Algorithm~\ref{alg:1} has a time complexity of $O(\frac{n}{k}^k)$ in the worst case, while Algorithm~\ref{alg:2} operates at a time complexity of $O(k^3n\log{n})$. The comparison of the execution times for these algorithms is presented in Table~\ref{tbl:1}, Fig~\ref{fig:4} and Fig~\ref{fig:5}. Table~\ref{tbl:1} provides the execution times and the corresponding perimeters for both the exact algorithm (Exact) and the approximation algorithm (APX) across various values of $n$ and $k$.

\begin{table}[]
\centering
\resizebox{\textwidth}{!}{
\begin{tabular}{lll|l|l|l|l|l|}
\cline{4-8}
                                            &                                                 &       & $k=4$      & $k=5$      & $k=6$      & $k=7$      & $k=8$      \\ \hline
\multicolumn{1}{|l|}{\multirow{4}{*}{$n=10$}} & \multicolumn{1}{l|}{\multirow{2}{*}{Time}}      & Exact & 1.665056 & 0.815696 & 0.673525 & 0.191644 & 2.578316 \\ \cline{3-8} 
\multicolumn{1}{|l|}{}                      & \multicolumn{1}{l|}{}                           & APX   & 0.269556 & 0.086585 & 0.074245 & 0.107769 & 0.087482 \\ \cline{2-8} 
\multicolumn{1}{|l|}{}                      & \multicolumn{1}{l|}{\multirow{2}{*}{Perimeter}} & Exact & 134.9572 & 66.407   & 170.7678 & 162.7868 & 100.9232 \\ \cline{3-8} 
\multicolumn{1}{|l|}{}                      & \multicolumn{1}{l|}{}                           & APX   & 134.9572 & 66.407   & 170.7678 & 162.7868 & 105.152  \\ \hline
\multicolumn{1}{|l|}{\multirow{4}{*}{$n=15$}} & \multicolumn{1}{l|}{\multirow{2}{*}{Time}}      & Exact & 3.242028 & 3.53374  & 4.085118 & 4.143144 & 5.221325 \\ \cline{3-8} 
\multicolumn{1}{|l|}{}                      & \multicolumn{1}{l|}{}                           & APX   & 0.101243 & 0.089292 & 0.093642 & 0.091797 & 0.109786 \\ \cline{2-8} 
\multicolumn{1}{|l|}{}                      & \multicolumn{1}{l|}{\multirow{2}{*}{Perimeter}} & Exact & 95.94496 & 125.3841 & 63.1264  & 217.3375 & 218.7068 \\ \cline{3-8} 
\multicolumn{1}{|l|}{}                      & \multicolumn{1}{l|}{}                           & APX   & 95.94496 & 147.0142 & 63.1264  & 262.5696 & 251.6334 \\ \hline
\multicolumn{1}{|l|}{\multirow{4}{*}{$n=20$}} & \multicolumn{1}{l|}{\multirow{2}{*}{Time}}      & Exact & 7.94314  & 11.63913 & 11.13294 & 18.85075 & 12.58581 \\ \cline{3-8} 
\multicolumn{1}{|l|}{}                      & \multicolumn{1}{l|}{}                           & APX   & 0.139876 & 0.12601  & 0.12412  & 0.125416 & 0.133383 \\ \cline{2-8} 
\multicolumn{1}{|l|}{}                      & \multicolumn{1}{l|}{\multirow{2}{*}{Perimeter}} & Exact & 55.81232 & 124.1199 & 104.1315 & 112.4648 & 206.2783 \\ \cline{3-8} 
\multicolumn{1}{|l|}{}                      & \multicolumn{1}{l|}{}                           & APX   & 78.39269 & 127.2698 & 158.5983 & 112.4648 & 288.5908 \\ \hline
\multicolumn{1}{|l|}{\multirow{4}{*}{$n=25$}} & \multicolumn{1}{l|}{\multirow{2}{*}{Time}}      & Exact & 14.82877 & 28.86919 & 62.3261  & 57.34854 & 84.27023 \\ \cline{3-8} 
\multicolumn{1}{|l|}{}                      & \multicolumn{1}{l|}{}                           & APX   & 0.203443 & 0.174234 & 0.179336 & 0.157593 & 0.153333 \\ \cline{2-8} 
\multicolumn{1}{|l|}{}                      & \multicolumn{1}{l|}{\multirow{2}{*}{Perimeter}} & Exact & 85.45303 & 93.87832 & 91.25952 & 131.4472 & 110.7979 \\ \cline{3-8} 
\multicolumn{1}{|l|}{}                      & \multicolumn{1}{l|}{}                           & APX   & 101.6047 & 94.74419 & 105.7901 & 167.6262 & 132.0743 \\ \hline
\multicolumn{1}{|l|}{\multirow{4}{*}{$n=30$}} & \multicolumn{1}{l|}{\multirow{2}{*}{Time}}      & Exact & 30.30081 & 111.3685 & 168.3525 & 171.341  & 383.6654 \\ \cline{3-8} 
\multicolumn{1}{|l|}{}                      & \multicolumn{1}{l|}{}                           & APX   & 0.377235 & 0.388495 & 0.362932 & 0.30269  & 0.329382 \\ \cline{2-8} 
\multicolumn{1}{|l|}{}                      & \multicolumn{1}{l|}{\multirow{2}{*}{Perimeter}} & Exact & 43.08548 & 46.68444 & 71.69952 & 116.0164 & 186.5542 \\ \cline{3-8} 
\multicolumn{1}{|l|}{}                      & \multicolumn{1}{l|}{}                           & APX   & 45.94552 & 68.19333 & 111.9154 & 187.6612 & 233.8898 \\ \hline
\multicolumn{1}{|l|}{\multirow{4}{*}{$n=35$}} & \multicolumn{1}{l|}{\multirow{2}{*}{Time}}      & Exact & 73.85159 & 58.34555 & 455.2198 & 589.9313 & 622.9517 \\ \cline{3-8} 
\multicolumn{1}{|l|}{}                      & \multicolumn{1}{l|}{}                           & APX   & 0.539233 & 0.433459 & 0.534906 & 0.512461 & 0.433698 \\ \cline{2-8} 
\multicolumn{1}{|l|}{}                      & \multicolumn{1}{l|}{\multirow{2}{*}{Perimeter}} & Exact & 36.97732 & 112.902  & 77.66447 & 85.63003 & 120.2107 \\ \cline{3-8} 
\multicolumn{1}{|l|}{}                      & \multicolumn{1}{l|}{}                           & APX   & 36.97732 & 169.5268 & 132.1911 & 85.63003 & 193.1477 \\ \hline
\multicolumn{1}{|l|}{\multirow{4}{*}{$n=40$}} & \multicolumn{1}{l|}{\multirow{2}{*}{Time}}      & Exact & 106.644  & 387.588  & 580.1023 & 2345.168 & 3685.009 \\ \cline{3-8} 
\multicolumn{1}{|l|}{}                      & \multicolumn{1}{l|}{}                           & APX   & 0.743369 & 0.719214 & 0.687392 & 0.651942 & 0.705154 \\ \cline{2-8} 
\multicolumn{1}{|l|}{}                      & \multicolumn{1}{l|}{\multirow{2}{*}{Perimeter}} & Exact & 41.00075 & 38.47183 & 102.8376 & 103.4379 & 89.85302 \\ \cline{3-8} 
\multicolumn{1}{|l|}{}                      & \multicolumn{1}{l|}{}                           & APX   & 52.00889 & 41.79108 & 125.3398 & 189.4559 & 152.3103 \\ \hline
\end{tabular}
}
\caption{Resuls of some samples for varying values of $n$ and $k$}
  \label{tbl:1}
\end{table}

\begin{figure}[h]%
\centering
  \includegraphics[width=\linewidth]{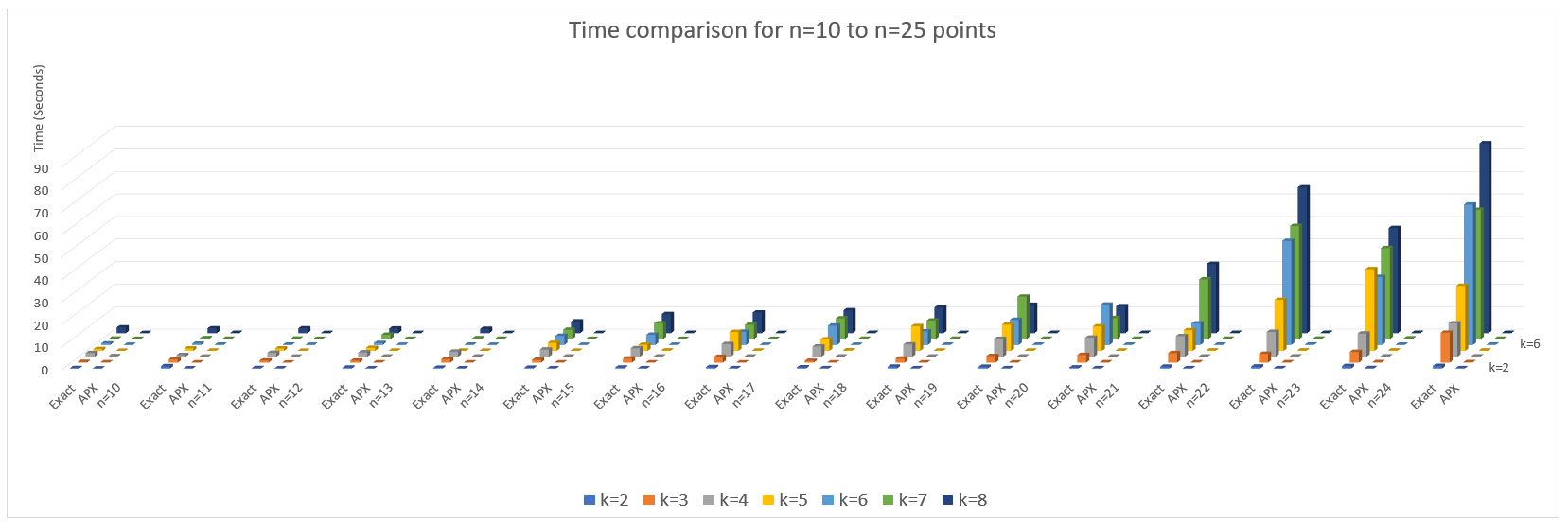}
  \caption{Comparison of the algorithms' execution time for $n$ ranging from 10 to 25 points and $k$ spanning from 2 to 8 colors.}
  \label{fig:4}
\end{figure}

\begin{figure}[h]%
\centering
  \includegraphics[width=\linewidth]{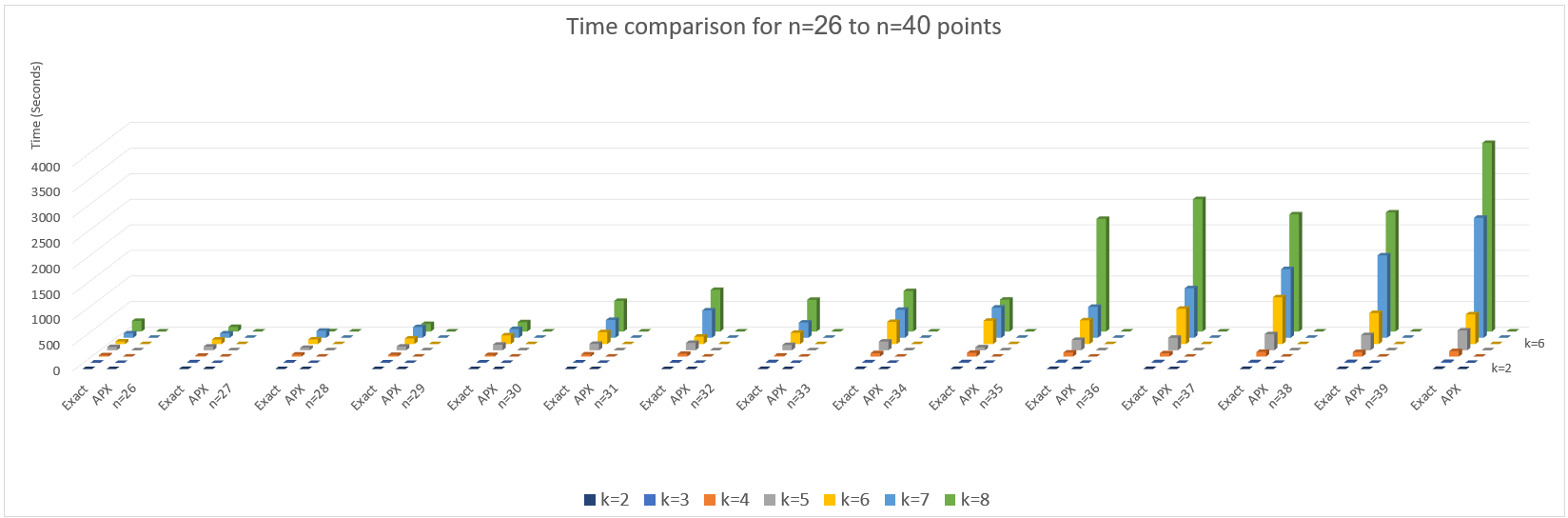}
  \caption{Comparison of the algorithms' execution time for $n$ ranging from 26 to 40 points and $k$ spanning from 2 to 8 colors.}
  \label{fig:5}
\end{figure}

\section{Conclusion}
\label{sec:4}
In this paper, the Colored Points TSP is defined as a variation of the TSP where points are colored with a set of distinct colors, and the objective is to find the minimum cycle to visit each color once. The NP-completeness of this problem is proven, and two exact and approximate algorithms are presented to solve this problem. The factor of the approximation algorithm is $\frac{2\pi r}{3}$, where $r$ is the radius of the smallest color-spanning circle of the points. The presented algorithms have been implemented, and numerical results have been obtained on random datasets. The efficiency of the approximation algorithm is demonstrated by the results.

%
%
%
%



  \bibliographystyle{elsarticle-num} 
  \bibliography{mybibfile}


%
%
%

\end{document}